\newtheorem{theorem}{\bf{Theorem}}[section]
\newtheorem{cor}[theorem]{Corollary}
\newtheorem{lem}[theorem]{Lemma}
\newtheorem{example}[theorem]{Example}
\newtheorem{defn}[theorem]{Definition}
\theoremstyle{plain}
\crefname{equation}{}{}
\crefname{section}{Section}{Sections}
\crefname{subsection}{Subsection}{Subsections}
\crefname{figure}{Figure}{Figures}
\crefname{defn}{Definition}{Definitions}
\crefname{theorem}{Theorem}{Theorems}
\crefname{cor}{Corollary}{Corollaries}
\crefname{remark}{Remark}{Remarks}
\crefname{example}{Example}{Examples}
\newcommand{\MADst}{cMAD tree\xspace}
\title{\LARGE\bfseries 
High Dimensional Robust Consensus over Networks\\ with Limited Capacity}
\author{Yasin Yaz{\i}c{\i}o\u{g}lu\thanks{Yasin~Yaz{\i}c{\i}o\u{g}lu is with the Department of Electrical and Computer Engineering at the University of Minnesota, Minneapolis, MN, USA (e-mail: \url{ayasin@umn.edu}).}  \and Alberto Speranzon\thanks{Alberto Speranzon is with Honeywell Aerospace, Advanced Technology  (e-mail: \url{alberto.speranzon@gmail.com}).}
}
\begin{document}

\maketitle

\begin{abstract}
We investigate robust linear consensus over networks under capacity-constrained communication. The capacity of each edge is encoded as an upper bound on the number of state variables that can be communicated instantaneously. When the edge capacities are small compared to the dimensionality of the state vectors, it is not possible to instantaneously communicate full state information over every edge. We investigate how robust consensus (small steady state variance of the states) can be achieved within a linear time-invariant setting by optimally assigning edges to state-dimensions. \color{black}We show that if a finite steady state variance of the states can be achieved, then both the minimum cut capacity and the total capacity of the network should be sufficiently large. \color{black} Optimal and approximate solutions are provided for some special classes of graphs. We also consider the related problem of optimally allocating additional capacity on a feasible initial solution. We show that this problem corresponds to the maximization of a submodular function subject to a matroid constraint, which can be approximated via a greedy algorithm. 

\end{abstract}


\section{Introduction}
\label{sec:introduction}
Consensus networks are used to model the diffusive couplings in numerous natural and engineered systems, which typically operate in the face of various disturbances such as noise, communication delays, component failures, misbehaving nodes, or malicious attacks (e.g., \cite{Young10, Bamieh12,Shi13,Yasin15TNSE}). For noisy consensus networks, $\mathcal{H}_2$-type robustness measures can be defined based on the expected steady state variance of states (e.g., \cite{Young10, Bamieh12}).  Some graph-topological bounds on this robustness measure were presented in~\cite{Siami16,jadbabaie2013combinatorial}. A structural robustness measure, which is the smallest possible value of steady state variance under edge weights from the unit interval, was presented along with some tight bounds in~\cite{Yasin19CDC}. 

In this paper, we investigate the optimization of structural robustness to noise~\cite{Yasin19CDC} in linear consensus networks under capacity-constrained communications. We model the capacity of each edge as an upper bound on the number of state vector components that can be instantaneously communicated. When the edges have capacities smaller than the dimensionality of the state vectors, one way to achieve consensus is to design a switching system where the dimensions of state communicated over the edges change over time (e.g.,~\cite{farina2019randomized}). In this paper, we investigate the optimal time-invariant assignment of edges to the subsets of state-dimensions, which is simpler than switching policies to implement in a distributed manner with robustness guarantees. Our main contributions are as follows: \color{black}We show that for any connected network with $n$ nodes, each of which has a $k$-dimensional state vector, if a finite steady state variance of the states can be achieved with a time-invariant approach, then the minimum cut capacity of the network should be at least $k$ and the total edge capacity should be at least $k(n-1)$.  \color{black} We provide optimal solutions for complete graphs, where all the edge capacities are equal and have the minimum value required for a finite cost. We then present an approximate solution, which we show to be near-optimal via numerical simulations, for circulant graphs satisfying specific conditions on the edge capacities.  We also consider the problem of optimally improving a feasible initial solution by allocating spare capacity. We show that the optimal capacity allocation corresponds to the maximization of a monotone submodular function subject to a matroid constraint, which can be solved approximately by a greedy algorithm.   


\section{ Preliminaries}
\label{sec:prelim}

\subsection{Notation}
We denote by $\mathbb{R}_{+}$, $\mathbb{Z}_+$ the sets of non-negative reals, and integers, respectively. For a finite set $A$ with cardinality $|A|$, $\mathbb{R}^{|A|}$ represents the space of $|A|$-dimensional vectors. For pairs of vectors $x,y \in \mathbb{R}^{|A|}$, $x \leq y$ (or $x<y$) denotes the element-wise inequalities. 
The zero, $\mathbf{0}$, and all-ones, $\mathbf{1}$, vectors will have sizes that are clear from the context.

\subsection{Graph Theory Basics}
A graph $\mathcal{G}=(V,E)$ consists of a node set $V=\{v_1, v_2,\dots,v_n\}$ and an edge set $E \subseteq V \times V$. For an undirected graph, each edge is represented as an unordered pair of nodes. A \emph{path} between a pair of nodes $v_i,v_j \in V$ is a sequence of distinct nodes, $v_i, \dots, v_j$, such that each pair of consecutive nodes are linked by an edge. A graph is \emph{connected} if there exists a path between every pair of nodes.


Any connected graph with $n$ nodes has at least $n-1$ edges,  and any connected graph with exactly $n-1$ edges is called a \emph{tree}. A \emph{spanning tree} of a graph $\mathcal{G}$ is a subgraph that is a tree and includes all the vertices of $\mathcal{G}$.  A \emph{minimum average distance tree} (MAD tree) is a spanning tree of $\mathcal{G}$ that minimizes the average distance between all pairs of vertices of the graph~\cite{dankelmann2000average}. Given two trees, one rooted at $r$, $\mathcal{T}_r=(V,E)$, and one at $r'$, $\mathcal{T}_{r'}'=(V',E')$, we say they are \emph{isomorphic} if there exists a bijection $f: V \to V'$ such that $r'=f(r)$ and $(u,v) \in E$ if and only if $(f(u),f(v)) \in E'$.

A \emph{cut} $(U,W)$ is a partition of the nodes of a graph into two disjoint subsets $U,W \subset V$.  Any cut $(U,W)$ defines a unique \emph{cut-set}, which is the set of edges that are incident to one node in each subset of the partition, i.e., $\mathcal{E}(U,W)= \{(u,w) \in E \mid u\in U, \; w \in W  \}.$
For weighted graphs, we use $w \in \mathbb{R}_+^{|E|}$ to denote the vector of edge weights and $w_{ij}\in \mathbb{R}_+$ to denote the weight of the edge $(v_i,v_j)\in E$. Accordingly,  the (weighted) \emph{graph Laplacian}, $L_w$, satisfies $L_w \mathbf{1}=\mathbf{0}$ and has the entries $[L_w]_{ij}=-w_{ij}$ for any two adjacent nodes, $v_j\in \mathcal{N}_i$. We denote the unweighted Laplacian ($w=\mathbf{1}$) by $L$.

\subsection{Circulant Graphs}

A circulant graph encodes the abstract structure of a \emph{cyclic group} that is generated by a single element~\cite{hungerford12}. Given a set $G$, with a single associative binary operation, $+$, and a generator $g\in G$, every other element is obtained by applying the group operation to $g$ or its inverse. We consider only finite cyclic groups (finite graphs) of order~$n$, which are isomorphic to $\mathbb{Z}_n$ (the additive group of integers modulo~$n$). 

\begin{defn}[Circulant Graph]
	Let $\mathbb{Z}_n$ be a finite cyclic group of order $n$ and let $S\subseteq \mathbb{Z}_n\setminus\{0\}$ be inverse closed (if $s \in S$ then $-s \in S$). The undirected circulant graph has the vertex set $\{0,\dots,n-1\}$ and the edge set $\{(g,h) |(h-g)\in S\}$.

	Since the group $\mathbb{Z}_n$ and the set $S$ fully defines the graph, we will indicate a circulant graph as $\mathcal{G}=(\mathbb{Z}_n,S)$.
\end{defn}

For any cyclic group, a subset $S \subseteq \mathbb{Z}_n$ is called a \emph{generating set} if, for any $g\in \mathbb{Z}_n$, we have 
$
	g = \gamma_1 s_1 + \gamma_2 s_2 + \dots + \gamma_{|S|} s_{|S|} \pmod n
$,
where $\gamma_i\in \mathbb{Z}$ and $s_j \in S$\footnote{Note that we can sum node indexes and edges since $S\subseteq G$ and the $+ \pmod n$ operation is well defined.}. A circulant graph $(\mathbb{Z}_n,S)$ is (strongly) connected if and only if $S$ is a \emph{generating set} of $\mathbb{Z}_n$ \cite{godsil-royle13}. Since $S$ is inverse closed, $(\mathbb{Z}_n,S)$ is undirected. Paths from node $i$ to node $j$ on a circulant graph are defined by a vector $(\gamma_1,\gamma_2,\dots,\gamma_{|S|})\in \mathbb{Z}_+^d$ such that
$	j - i = \sum_{\ell=1}^{|S|} \gamma_\ell s_\ell \pmod n.$
In such a vector, each $\gamma_\ell$ defines how many edges of type $\ell$ belong to a path from~$i$ to~$j$. As the group is Abelian, the tuple $(\gamma_1,\dots,\gamma_{|S|})$ defines multiple paths each containing $\gamma_\ell$ edges of type $\ell$. 
Given $S$, we define $|S|/2$ \emph{classes} of edges. Each class corresponds to one element in $S$ and its inverse. Thus, we say that an edge is of class~$\ell$ if its  type is $\ell$ or $-\ell$ ($s_{\ell}$ or $s_{-\ell}$). 

\begin{example}
	As an example of a circulant graph, let us consider $\mathcal{G}=(\mathbb{Z}_9, \{\pm1, \pm 2, \pm3\})$.
\begin{center}
	\begin{tikzpicture}[scale=1.2]
		\foreach \i in {0,...,8} \node[circle, draw, minimum size=2.5ex, inner sep = 0pt, outer sep = 0.05cm] (\i) at (-360/9 * \i:1.5cm) {\small $\i$};
		\foreach \i in {0,...,8}{
			\foreach \j in {+1,-1,+2,-2,+3,-3}{
				\pgfmathtruncatemacro{\nexti}{Mod(\i + \j,9)}
				\ifthenelse{\j=1 \OR \j=-1}{
					\draw[thin, red] (\i) -- (\nexti);}
				{
				 \ifthenelse{\j=2 \OR \j=-2}{
				 	\draw[thin, blue] (\i) -- (\nexti);}
			 	  {
				 	\ifthenelse{\j=3 \OR\j=-3}{
				 	\draw[thin, orange] (\i) -- (\nexti);}
				 	{\draw[thin, black] (\i) -- (\nexti);}} 	
		 		}
			}
		}
	\end{tikzpicture}
\end{center}
	Note that, since $\langle 1\rangle$ generates the group $\mathbb{Z}_9$, then the circulant graph is connected. Also, note that since $\langle 3 \rangle$ does not generate the group, the subgraphs comprising vertices $\{0,3,6\}$ and $\{1,4,7\}$ are not connected.
	We indicated with in red edges of class~$1$, in blue class~$2$ and in orange class~$3$.
\end{example}


\begin{defn}[Class-constrained MAD tree]\label{def:CMAD}
    Given a circulant graph $\mathcal{G}=(\mathbb{Z}_n,S)$ and $h\in\mathbb{Z}^{|S|/2}_+$, a \emph{class-constrained MAD tree} (\MADst) is a MAD tree of $\mathcal{G}$ that contains exactly $h_\ell$ edges of class~$\ell$. Note that such a \MADst may not exist for arbitrary $h\in\mathbb{Z}^{|S|/2}_+$.
\end{defn}

\subsection{Consensus Networks}
We present the preliminaries for a network, $\mathcal{G}=(V,E)$, where each agent $v_i\in V$ has a scalar state $x_i(t) \in \mathbb{R}$. Accordingly, the noisy consensus dynamics are expressed as \begin{equation}\label{eq:consensus2}
\dot{x}(t)=-L_w x(t)+ \xi(t),
\end{equation}
where $x_i(t) \in \mathbb{R}$ denotes the state of $v_i\in V$, $L_w$ denotes the weighted Laplacian matrix, and $\xi(t) \in \mathbb{R}^n$ is i.i.d. white Gaussian noise with zero mean and unit covariance. On any connected graph, the dynamics in \cref{eq:consensus2} result in a finite steady state variance of $x(t)$  \cite{Young10, Bamieh12}. Accordingly, the robustness of the network can be quantified through the expected population variance in steady state, i.e.,$
\mathcal{H}(\mathcal{G},w) \coloneqq \lim_{t \to \infty} \frac{1}{n}\sum\limits_{i =1}^{n} \mathrm{E}[{(x_i(t)-\tilde{x}(t))^2}]$, where $\tilde{x}(t) \in \mathbb{R}$ denotes the average of $x_1(t), \dots, x_n(t)$. It is possible to express $\mathcal{H}(\mathcal{G},w)$ in terms of the eigenvalues of $L_w$ (e.g., see \cite{Young10,Bamieh12}). In \cite{Yasin19CDC}, it was shown that $\mathcal{H}(\mathcal{G},w)$ monotonically decreases in all edge weights and the  smallest value under weights from the unit interval $(0,1]$ was defined as a structural measure of network robustness, i.e.,
\begin{equation}
\label{eq:hseig}
\mathcal{H}^*(\mathcal{G}) \coloneqq \min_{\mathbf{0} < w \leq \mathbf{1}} \mathcal{H}(\mathcal{G},w) = \frac{1}{2n} \sum_{i=2}^n \frac{1}{\lambda_i (L)},
\end{equation}
where ${0 < \lambda_2(L) \leq \dots \leq \lambda_n(L) }$ denote the eigenvalues of $L$. 


\section{Problem Formulation}
\label{sec:problem}
We consider a consensus network, where each agent $v_i$ has a $k$-dimensional state, i.e., $x_i(t) = [x_i^1(t) \dots x_i^k(t)]^\intercal \in \mathbb{R}^k$. The agents communicate with each other over an undirected network $\mathcal{G}=(V,E)$. Each edge $e \in E$ has some capacity $c_e \in \mathbb{Z}_+$, which is an upper bound on the dimensionality of vectors that can be communicated instantaneously over the edge. When $k \geq  c_e$, for some $e\in E$, the standard multi-dimensional consensus algorithm, where each dimension of state $x^1(t), \dots, x^k(t)$ evolves separately under \cref{eq:consensus2}, can not be implemented on $\mathcal{G}$.  Here, each $x^\ell(t) = [x_1^\ell(t) \dots x_n^\ell(t)]^\intercal  \in \mathbb{R}^n$ is the vector obtained by concatenating the $\ell^{th}$ dimension of each agent's state.   Accordingly, we investigate the problem of finding the optimal fixed assignment of edges to subsets of state-dimensions that minimizes the steady state population variance of states. Any such assignment of edges on $\mathcal{G}$ defines $k$ subgraphs, $\mathcal{G}_1, \dots, \mathcal{G}_k$, where each $\mathcal{G}_\ell =(V,E_\ell)$ denotes the network corresponding to the dynamics of the $\ell^{th}$ dimension of state:
$\dot{x}^\ell(t)=-L_w^\ell x(t)+ \xi^\ell(t)$. Here, $L_w^\ell$ is the weighted Laplacian of the graph with the edge set $E_\ell\subseteq E$, i.e., the set of edges over which the $\ell^{th}$ dimension of state is communicated.  As per the capacity limitations, each edge $e \in E$ can only appear in at most $c_e$ of these subgraphs. Hence,
\begin{equation}
   \label{eq:clim}
   \sum_{\ell=1}^k|\{e\} \cap E_\ell| \leq c_e, \; \forall e \in E.
\end{equation}



In this paper, we are investigating the optimal design of subgraphs $\mathcal{G}_1, \dots, \mathcal{G}_k$ to facilitate robust consensus, posed as the following optimization problem:
\begin{align}\label{eq:prob}
       \min_{\mathcal{G}_1=(V,E_1), \dots, \mathcal{G}_k=(V,E_k)} \quad &  \sum_{\ell=1}^k\mathcal{H}^*(\mathcal{G}_\ell) \\
       \mathrm{s.t.}
       \quad & \cref{eq:clim}. \nonumber
\end{align}


\section{Main Results}
\label{sec:main}
\subsection{Feasibility Condition}
\color{black}
 We start our analysis by showing that if \cref{eq:prob} has a solution with finite cost, then the minimum cut capacity of the network cannot be less than the dimensionality of the states. 

\begin{theorem}\footnote{\color{black} This is a correction to \cite[Thm 4.1]{yazicioglu2020high}, where  \cref{eq:minc2} was claimed to be also sufficient for the existence of a solution with finite cost. The sufficiency claim is retracted in this version with no impact on the other results of the paper.}
\label{thm:mincut-f}
Let $\mathcal{G}=(V,E)$ be a connected graph, and let each $e \in E$ have a capacity $c_e \in \mathbb{Z}_+$. If \cref{eq:prob} has a feasible solution $\mathcal{G}_1, \dots, \mathcal{G}_k$ with finite cost, i.e., 
\begin{equation}
\label{eq:minc1}
    \sum_{\ell=1}^k\mathcal{H}^*(\mathcal{G}_\ell) < \infty,
\end{equation}
then the minimum cut capacity of $\mathcal{G}$ satisfies
\begin{equation}
\label{eq:minc2}
    \min_{(U,W) \in \mathcal{C}(\mathcal{G})} \sum_{e \in \mathcal{E}(U,W)}c_e \geq k,
\end{equation}
where $\mathcal{C}(\mathcal{G})$ is the set of all cuts on $\mathcal{G}$, and $\mathcal{E}(U,W)$ is the cut-set associated with the cut $(U,W)$.
\end{theorem}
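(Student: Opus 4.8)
The plan is to reduce the statement to a counting argument over the subgraphs crossing an arbitrary cut, after first observing that finite cost forces every subgraph $\mathcal{G}_\ell$ to be connected.

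First I would argue that $\mathcal{H}^*(\mathcal{G}_\ell) < \infty$ implies $\mathcal{G}_\ell$ is connected. Indeed, by \cref{eq:hseig} the quantity $\mathcal{H}^*(\mathcal{G}_\ell)$ equals $\tfrac{1}{2n}\sum_{i=2}^n 1/\lambda_i(L^\ell)$, where $L^\ell$ is the unweighted Laplacian of $\mathcal{G}_\ell$; if $\mathcal{G}_\ell$ were disconnected, then $\lambda_2(L^\ell)=0$ and this sum is $+\infty$ (equivalently, the noisy dynamics $\dot{x}^\ell=-L_w^\ell x^\ell+\xi^\ell$ have unbounded steady-state population variance whenever $\mathcal{G}_\ell$ is disconnected). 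Since \cref{eq:minc1} forces each summand $\mathcal{H}^*(\mathcal{G}_\ell)$ to be finite, every $\mathcal{G}_\ell=(V,E_\ell)$ must be connected.

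Next, fix an arbitrary cut $(U,W)\in\mathcal{C}(\mathcal{G})$. For each $\ell$, connectedness of $\mathcal{G}_\ell$ implies that at least one edge of $E_\ell$ joins $U$ to $W$; any such edge lies in the cut-set $\mathcal{E}(U,W)$ of $\mathcal{G}$, so $E_\ell\cap\mathcal{E}(U,W)\neq\emptyset$. Now double-count the incidences between the index set $\{1,\dots,k\}$ and the set $\mathcal{E}(U,W)$, where $\ell$ is incident to $e$ when $e\in E_\ell$. Each $\ell$ is incident to at least one edge of $\mathcal{E}(U,W)$ by the previous sentence, so the number of incidences is at least $k$. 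On the other hand, by the capacity constraint \cref{eq:clim}, each $e\in\mathcal{E}(U,W)$ appears in at most $c_e$ of the sets $E_1,\dots,E_k$, so the number of incidences is at most $\sum_{e\in\mathcal{E}(U,W)}c_e$. Combining,
\begin{equation*}
k \;\le\; \sum_{e\in\mathcal{E}(U,W)} c_e .
\end{equation*}
Since $(U,W)$ was arbitrary, taking the minimum over $\mathcal{C}(\mathcal{G})$ yields \cref{eq:minc2}.

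I do not expect a genuine obstacle here; the only point that needs care is the first step, namely justifying that finite cost rules out a disconnected subgraph, which is immediate from the spectral formula \cref{eq:hseig} (a disconnected $\mathcal{G}_\ell$ has $\lambda_2=0$). Everything after that is an elementary double-counting argument applied cut by cut.
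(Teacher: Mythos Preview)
Your proposal is correct and follows essentially the same route as the paper: first use the spectral formula \cref{eq:hseig} to conclude that finite cost forces every $\mathcal{G}_\ell$ to be connected, then for an arbitrary cut count cut-set edges across the $k$ subgraphs and invoke the capacity constraint \cref{eq:clim}. Your phrasing as a double-counting of incidences is a slightly cleaner presentation of the same pigeonhole step the paper uses.
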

\begin{proof}

For any graph $\mathcal{G}_\ell$, let $L_\ell$ be the unweighted Laplacian. In light of \cref{eq:hseig}, $\mathcal{H}^*(\mathcal{G}_\ell)<\infty$ if and only if $0 < \lambda_2(L_\ell)$, which is true if and only if $\mathcal{G}_\ell$ is connected (e.g., see \cite{gross2003handbook}). Accordingly, \cref{eq:minc1} is true if and only if there exists $\mathcal{G}_1, \dots,\mathcal{G}_k$ that are all connected and satisfy \cref{eq:clim}. For any cut $(U,W)$, in order to have $U$ and $W$ connected on all $\mathcal{G}_1, \dots,\mathcal{G}_k$, each $E_1, \dots,E_k$ should contain at least one edge in the cut-set $\mathcal{E}(U,W)$. In light of \cref{eq:clim}, each $e \in \mathcal{E}(U,W)$ can be contained in at most $c_e$ of those edge sets. Accordingly, $U$ and $W$ must be disconnected on at least one of the graphs if  $\sum_{e \in \mathcal{E}(U,W)}c_e < k$. Hence, \cref{eq:minc2} is a necessary condition for the existence of a feasible solution to \cref{eq:prob} satisfying \cref{eq:minc1}.

\end{proof}


\color{black}
 Our next result builds on \cref{thm:mincut-f} and provides a necessary condition on the total edge capacity for \cref{eq:prob} to have a solution with finite cost. 
\begin{cor}
\label{eq:mincut-c1}
Let $\mathcal{G}=(V,E)$ be a connected graph, and let each $e \in E$ have a capacity $c_e \in \mathbb{Z}_+$. If \cref{eq:prob} has a feasible solution $\mathcal{G}_1, \dots,\mathcal{G}_k$ with finite cost, then the total edge capacity must be at least $k(n-1)$, i.e.,
\begin{equation}
\label{eq:mincc11}
    \sum_{e \in E }c_e \geq k(n-1).
\end{equation}
\end{cor}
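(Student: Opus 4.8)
The plan is to reuse the key structural fact extracted in the proof of \cref{thm:mincut-f}: a feasible solution to \cref{eq:prob} has finite cost if and only if every layer $\mathcal{G}_\ell=(V,E_\ell)$ is connected. Given that, the corollary reduces to a double-counting argument on edges, with the capacity constraint \cref{eq:clim} doing the work.

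Concretely, first I would invoke \cref{thm:mincut-f} (or rather the equivalence established inside its proof) to conclude that, under \cref{eq:minc1}, each $\mathcal{G}_\ell$ is connected for $\ell=1,\dots,k$. Next, since any connected graph on $n$ vertices has at least $n-1$ edges, we get $|E_\ell|\ge n-1$ for every $\ell$, hence $\sum_{\ell=1}^k |E_\ell| \ge k(n-1)$. Then I would rewrite the left-hand side by exchanging the order of summation: $\sum_{\ell=1}^k |E_\ell| = \sum_{\ell=1}^k \sum_{e\in E} |\{e\}\cap E_\ell| = \sum_{e\in E}\sum_{\ell=1}^k |\{e\}\cap E_\ell|$. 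Applying \cref{eq:clim} term-by-term bounds the inner sum by $c_e$, so $\sum_{\ell=1}^k |E_\ell| \le \sum_{e\in E} c_e$. Chaining the two inequalities yields $\sum_{e\in E} c_e \ge k(n-1)$, which is \cref{eq:mincc11}.

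There is essentially no hard step here; the only thing worth being careful about is the logical dependency, namely that the "connectivity of each layer" characterization is precisely what the proof of \cref{thm:mincut-f} already gives us (via \cref{eq:hseig} and the fact that $\lambda_2(L_\ell)>0$ iff $\mathcal{G}_\ell$ is connected), so the corollary does not need any new spectral input. One could alternatively try to derive \cref{eq:mincc11} by summing the cut-capacity bound \cref{eq:minc2} over a suitable family of cuts, but the direct edge-counting argument above is cleaner and tight, so that is the route I would take.
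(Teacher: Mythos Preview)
Your proposal is correct and mirrors the paper's own proof essentially step for step: invoke the connectivity-of-each-layer equivalence from the proof of \cref{thm:mincut-f}, use $|E_\ell|\ge n-1$ to get $\sum_\ell |E_\ell|\ge k(n-1)$, and then bound $\sum_\ell |E_\ell|\le \sum_e c_e$ via the capacity constraint \cref{eq:clim}. The only cosmetic difference is that you spell out the exchange-of-summation explicitly, whereas the paper states the second inequality directly.
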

\begin{proof}
As shown in the proof of Theorem~\ref{thm:mincut-f}, having a feasible solution $\mathcal{G}_1, \dots,\mathcal{G}_k$ with finite cost is equivalent to all of these subgraphs being connected. Any connected graph on $n$ nodes contains at least $n-1$ edges (e.g., see \cite{gross2003handbook}). Hence, if $\mathcal{G}_1, \dots, \mathcal{G}_k$ are all connected, then 
\begin{equation}
    \label{eq:mincc12}
    \sum_{\ell=1}^k |E_\ell| \geq k(n-1).
\end{equation}
Since each edge $e \in E$ can be contained in at most $c_e$ of the graphs $\mathcal{G}_1, \dots, \mathcal{G}_k$, we also have
\begin{equation}
    \label{eq:mincc13}
    \sum_{\ell=1}^k |E_\ell| \leq \sum_{e\in E} c_e.
\end{equation}
Using \cref{eq:mincc12} and \cref{eq:mincc13}, we obtain \cref{eq:mincc11}.
\end{proof}



\subsection{Special Classes of Graphs}\label{subsec:classes_graphs}  
Finding the optimal solution to  \cref{eq:prob} is a combinatorial problem that is intractable for arbitrary graphs. However, it is possible to identify the optimal or approximate solutions in some special cases.  In particular, we consider complete graphs and circulant graphs. We analyze complete graphs due to their significance as the networks with maximum robustness in the absence of capacity constraints. We also analyze circulant graphs, which contain complete graphs and other families such as cycle graphs and complete bipartite graphs, since they appear in many applications in engineering
and computer science (e.g.,~\cite{usevitch2017r,ekambaram2013circulant}).

\subsubsection{Complete Graph}
Our next result provides the optimal solution to \cref{eq:prob} for any complete graph where all the edge capacities are equal and they have the minimum value required for a finite cost, i.e., \cref{eq:mincc11} is satisfied with equality. 

\begin{theorem}
\label{thm:comp-t}
Let $\mathcal{G}$ be a complete graph with $n \geq 3$ nodes, let the state of each node be in $\mathbb{R}^k$ where $k=\alpha n$ for some  ${\alpha \in \mathbb{Z}_+}$, and let every edge have a capacity of $2\alpha$.
Then, \cref{eq:prob} has a unique optimal solution where $\mathcal{G}_1, \dots, \mathcal{G}_k$ are all star graphs and each node is the hub in exactly $\alpha$ of the stars.
\end{theorem}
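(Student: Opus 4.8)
The plan is to first collapse \cref{eq:prob} to a combinatorial problem over collections of spanning trees, and then to identify the optimum through the Wiener index. Under the stated hypotheses the total-capacity bound of \cref{eq:mincut-c1} is tight: $\sum_{e\in E}c_e=\binom{n}{2}\cdot 2\alpha=\alpha n(n-1)=k(n-1)$. Hence for any feasible solution with finite cost we have $k(n-1)\le\sum_{\ell=1}^{k}|E_\ell|\le\sum_{e\in E}c_e=k(n-1)$, where the left inequality holds because each $\mathcal{G}_\ell$ must be connected (as in the proof of \cref{thm:mincut-f}, $\mathcal{H}^*(\mathcal{G}_\ell)<\infty$ iff $\mathcal{G}_\ell$ is connected) and the right inequality is \cref{eq:clim}. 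Equality throughout forces (i) $|E_\ell|=n-1$ for every $\ell$, so each $\mathcal{G}_\ell$ is a spanning tree, and (ii) every edge of $\mathcal{G}$ lies in exactly $2\alpha$ of the subgraphs. Thus \cref{eq:prob} is equivalent to choosing spanning trees $T_1,\dots,T_k$ of $\mathcal{G}$, each edge used in exactly $2\alpha$ of them, so as to minimize $\sum_{\ell=1}^{k}\mathcal{H}^*(T_\ell)$.

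Next I would evaluate $\mathcal{H}^*$ on a tree. Combining \cref{eq:hseig} with the classical identity $\sum_{i=2}^{n}1/\lambda_i(L)=\frac1n\sum_{i<j}r_{ij}$ for the total effective resistance, together with the fact that the effective resistance between two nodes of a tree equals their graph distance, one obtains $\mathcal{H}^*(T)=\frac{1}{2n^2}W(T)$, where $W(T)=\sum_{i<j}d_T(i,j)$ is the Wiener index. Expanding each distance as a sum of edge indicators along the unique tree path and counting pairs, $W(T)=\sum_{e\in T}s_e(n-s_e)$, where deleting $e$ splits $T$ into components of sizes $s_e$ and $n-s_e$.

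Then comes the core estimate. Since $s(n-s)\ge n-1$ for every integer $s\in\{1,\dots,n-1\}$, with equality exactly when $s\in\{1,n-1\}$, and $T$ has $n-1$ edges, $W(T)\ge(n-1)^2$, with equality iff deleting any edge of $T$ splits off a single vertex. The latter holds iff $T$ is a star: a tree that is not a star has diameter at least $3$, hence contains a path on at least four vertices $v_0,v_1,v_2,v_3,\dots$, and deleting its internal edge $(v_1,v_2)$ leaves $\{v_0,v_1\}$ on one side and $\{v_2,v_3\}$ on the other, so both components have size at least $2$, giving strictly more than $n-1$ for that term when $n\ge 4$ (and for $n=3$ every tree is already a star). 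Summing over $\ell$, every feasible finite-cost solution satisfies $\sum_{\ell=1}^{k}\mathcal{H}^*(T_\ell)\ge\frac{k(n-1)^2}{2n^2}$, with equality iff every $T_\ell$ is a star.

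It remains to determine which all-star configurations are feasible. Let $a_v$ denote the number of indices $\ell$ for which $T_\ell$ is the star centered at $v$, so $\sum_v a_v=k=\alpha n$. In an all-star configuration the edge $(u,v)$ is used precisely by the stars centered at $u$ and those centered at $v$, so condition (ii) above reads $a_u+a_v=2\alpha$ for every pair $u\ne v$; since $\mathcal{G}$ is complete this holds for all pairs and forces $a_v=\alpha$ for all $v$. Conversely, the balanced assignment $a_v=\alpha$ for all $v$ is feasible — it uses each edge exactly $2\alpha$ times — and consists only of stars, so it attains the lower bound. Hence it is the unique optimal solution, which is the assertion of the theorem. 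The step I expect to require the most care is the second one: establishing the identity $\mathcal{H}^*(T)=W(T)/(2n^2)$ from \cref{eq:hseig} (via the Kirchhoff-index formula and resistance-equals-distance on trees) and nailing down the exact equality case $W(T)=(n-1)^2\iff T$ is a star; once these are in place, the reduction in the first paragraph and the counting in the last are routine.
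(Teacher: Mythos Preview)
Your proof is correct and follows the same three-step skeleton as the paper: use the tight total-capacity bound to force every $\mathcal{G}_\ell$ to be a spanning tree, argue that the star uniquely minimizes $\mathcal{H}^*$ among trees, and then show that feasibility pins down the hub multiplicities. The only differences are in execution: where the paper cites \cite{Yasin19CDC} for the star-optimality step, you supply a self-contained Wiener-index argument, and for the hub counts the paper uses a vertex-degree inequality ($\sigma_i(n-1)+(\alpha n-\sigma_i)\le 2\alpha(n-1)\Rightarrow\sigma_i\le\alpha$) while you use the per-edge equality $a_u+a_v=2\alpha$ directly---both elementary and equivalent.
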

\begin{proof}
For any $k=\alpha n$ and a complete graph where the capacity of edge is $2\alpha$, the total edge capacity satisfies 
\begin{equation}
\label{eq:comp-t1}
    \sum_{e \in E }c_e =\alpha n(n-1) = k(n-1).
\end{equation}
Accordingly, in light of~\cref{eq:mincut-c1}, such a network has the minimum possible total capacity for~\cref{eq:prob} to have a feasible solution with finite cost, i.e., $\mathcal{G}_1, \dots,\mathcal{G}_k$ that are all connected. Furthermore, since each $e \in E$ can be used in at most $c_e$ of those graphs, \cref{eq:comp-t1} implies that if a feasible solution with finite cost exists, then each of the corresponding $k$ subgraphs must have $n-1$ edges. Accordingly, $\mathcal{G}_1, \dots,\mathcal{G}_k$ must all be tree graphs. For tree graphs, the unique minimizer of \cref{eq:hseig} is the star graph (e.g., see \cite{Yasin19CDC}). Accordingly, we will complete the proof by showing that when the capacity of each edge on a complete graph is $2\alpha$, it is possible to build all of $\mathcal{G}_1, \dots, \mathcal{G}_k$ as star graphs and each node must be the hub in exactly $\alpha$ of those stars. To this end, first let us consider a set of $k$ star graphs where each node $v_i\in V$ is the hub in $\sigma_i$ of those graphs. Since there are $k=n\alpha$ graphs in total,
\begin{equation}
\label{eq:comp-t3}
    \sigma_1 +\sigma_2 + \dots + \sigma_n =n \alpha.
\end{equation}
Furthermore, the total capacity of the edges incident to any $v_i \in V$ should be at least equal to the sum of degrees of $v_i$ on the stars. As each $v_i \in V$ has a degree of $n-1$ on each star where $v_i$ is the hub and a degree of $1$ on all the other $k-\sigma_i=\alpha n-\sigma_i$ graphs, for all $i\in \{1, \dots, n\}$ we have
 \begin{equation}
\label{eq:comp-t4}
    \sum_{v_j:(v_i,v_j) \in E }\mkern-25mu c_{(v_i,v_j)} = 2\alpha(n-1)\geq \sigma_i(n-1)+\alpha n-\sigma_i.
\end{equation}
Using~\cref{eq:comp-t4}, we get 
$\alpha(n-2)\geq \sigma_i(n-2)$,
which implies that 
$\sigma_i \leq  \alpha$ for all $n\geq 3$. Using this inequality together with \cref{eq:comp-t3}, we obtain
 $\sigma_1= \dots = \sigma_n=\alpha$.
Accordingly, each $v_i$ must be the hub in exactly $\alpha$ star graphs. Note that the resulting graphs satisfy the capacity constraint of each edge since for every pair of nodes $v_i,v_j \in V$, $(v_i,v_j) \in E$ appears in $s_i+s_j =2\alpha$ of those stars. Consequently, building all of $\mathcal{G}_1, \dots, \mathcal{G}_k$ as star graphs, where each node is the hub of exactly $\alpha$ stars is the unique optimal solution to \cref{eq:prob}.
\end{proof}

\subsubsection{Circulant Graphs}
We present an approximate solution to~\cref{eq:prob} for circulant graphs satisfying some conditions on the edge capacities.  The proposed solution is based on finding a \MADst and using it to obtain $\mathcal{G}_1, \dots,\mathcal{G}_k$ that are all related through the rotation isomorphism, which we define next. 

\begin{defn}[Rotation Isomorphism]\label{def:rotation_iso} 
    Given a tree rooted at an arbitrary node~$i$, $\mathcal{T}_i$, of a circulant graph, we define a \emph{rotation by $\delta$} a new tree $\mathcal{T}_j$ rooted at $j=i+\delta \pmod n$ such that every $v\in \mathcal{T}_i$ is mapped to $w = v + \delta \pmod n$ in $\mathcal{T}_j$. Such a  rotation defines an isomorphism.
\end{defn}

\begin{lem}\label{lem:existance_MAD_circ_graph}
Let $\mathcal{G}=(\mathbb{Z}_n,S)$ be a circulant graph and let $h \in \mathbb{Z}^{|S|/2}_+$ satisfy $\sum_{\ell=1}^{|S|/2} h_\ell = n - 1$, with $h_{\bar{\ell}} \geq 1$ for at least one generator $\bar{\ell}$. Then, there exists a cMAD with exactly $h_\ell$ edges of class~$\ell$. 
\end{lem}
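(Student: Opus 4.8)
\emph{Proof plan.} I would prove the statement by an explicit construction that gives the generator class $\bar\ell$ and the remaining classes distinct jobs: the non-generator classes are used to lay down a forest with the prescribed edge counts, while the generator class --- the only class guaranteed to connect $\mathbb Z_n$ on its own --- is used to glue that forest into a spanning tree. It is convenient to normalize first. Since $s_{\bar\ell}$ generates $\mathbb Z_n$ we have $\gcd(s_{\bar\ell},n)=1$, so $x\mapsto s_{\bar\ell}^{-1}x\pmod{n}$ is a bijection of $\mathbb Z_n$ inducing an isomorphism of $\mathcal G$ onto the circulant graph $(\mathbb Z_n,\,s_{\bar\ell}^{-1}S)$. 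This isomorphism preserves all pairwise distances (hence carries MAD trees to MAD trees), maps each edge class bijectively onto an edge class and preserves the number of class-$\ell$ edges in every subgraph, and it sends the class of $s_{\bar\ell}$ to the class of $1$. So without loss of generality $1\in S$ and $\bar\ell$ is its class; then the class-$\bar\ell$ edges $\{i,i+1\}$, $i\in\mathbb Z_n$, form a Hamiltonian cycle $C$ of $\mathcal G$.

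\textbf{Step 1 (a forest from the non-generator classes).} Construct an acyclic subgraph $F$ of $\mathcal G$ using only edges of classes $\ell\neq\bar\ell$ and containing exactly $h_\ell$ edges of each such class. I would attempt this greedily in the cycle matroid of $\mathcal G$: process the classes $\neq\bar\ell$ in an arbitrary order and, for the current class, add $h_\ell$ of its edges one at a time, skipping any edge that would close a cycle with the edges already selected. Because $\sum_{\ell\neq\bar\ell}h_\ell=n-1-h_{\bar\ell}$, the resulting forest $F$ has $n-1-h_{\bar\ell}$ edges and therefore exactly $h_{\bar\ell}+1$ connected components.

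\textbf{Step 2 (gluing with the generator class).} Contract each connected component of $F$ to a single vertex and let $\widehat C$ be the image of the Hamiltonian cycle $C$ in the resulting multigraph, which has $h_{\bar\ell}+1$ vertices. Since $C$ is connected and spanning, $\widehat C$ is connected and spanning, so it has a spanning tree, and that tree has exactly $(h_{\bar\ell}+1)-1=h_{\bar\ell}$ edges, each of them an edge of $C$, i.e.\ of class $\bar\ell$. Adding the corresponding $h_{\bar\ell}$ edges of $C$ back to $F$ joins all components of $F$ and creates no cycle: any cycle in the union of $F$ and these edges would, after contracting the components of $F$, become a cycle among these $h_{\bar\ell}$ edges, contradicting that they form a tree in the contracted graph. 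The result is connected, spans $\mathbb Z_n$, and has $(n-1-h_{\bar\ell})+h_{\bar\ell}=n-1$ edges, so it is a spanning tree of $\mathcal G$ with exactly $h_\ell$ edges of each class $\ell$. Consequently the set of spanning trees of $\mathcal G$ with class profile $h$ is nonempty, and a minimizer of the average distance over this set is a \MADst with the prescribed class counts.

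\textbf{Expected main obstacle.} Step 2 is the easy half, and it is exactly there that the generator hypothesis is needed: the image of a connected spanning subgraph under any vertex contraction is again connected and spanning, so a single generator class always suffices to reconnect an arbitrary forest on $\mathbb Z_n$. The real work is Step 1 --- showing that the non-generator classes can indeed carry the counts $h_\ell$ inside a forest. The greedy procedure succeeds at a given stage only if the edges of the current class raise the rank of the forest built so far by at least $h_\ell$; this forces, for instance, each $h_\ell$ to be at most the maximum number of class-$\ell$ edges that fit in an acyclic graph (which equals $n-\gcd(s_\ell,n)$ for the class of $s_\ell$) and, more delicately, the counts $(h_\ell)_{\ell\neq\bar\ell}$ to be jointly realizable by a forest of the non-generator subgraph. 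Verifying that the stated hypotheses on $h$ --- or whatever additional structure $h$ inherits from the construction of $\mathcal G_1,\dots,\mathcal G_k$ --- guarantee this is the crux; everything else in the argument is routine bookkeeping.
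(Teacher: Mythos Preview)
Your plan differs from the paper's. The paper also normalizes to $s_{\bar\ell}=1$, but then argues by iterated edge swaps starting from the Hamiltonian path on class-$1$ edges: remove a class-$1$ edge, note that some vertex $q$ is cut off, and reconnect with the class-$\ell'$ edge $(q,q+s_{\ell'})$; repeat until the target profile $h$ is reached. Your forest-then-glue decomposition is cleaner in its second half --- the contraction argument in Step~2 is watertight --- whereas the paper never checks that the reconnecting endpoint $q+s_{\ell'}$ actually lands in the other component, nor that the swap can be iterated once the current tree is no longer a path.

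More importantly, the obstacle you isolate in Step~1 is genuine and in fact fatal to the lemma as written. Take $\mathcal{G}=(\mathbb{Z}_9,\{\pm1,\pm3\})$ with $h_1=1$, $h_3=7$: the hypotheses hold ($h_1+h_3=8=n-1$ and $h_1\ge1$ for the generator $1$), yet the class-$3$ edges decompose into three vertex-disjoint triangles, so any acyclic subgraph contains at most $n-\gcd(3,9)=6$ of them, and no spanning tree with seven class-$3$ edges exists. The paper's swap argument conceals exactly the same gap --- at some stage there is no class-$3$ edge available that crosses the cut without closing a cycle --- but does not acknowledge it. So neither argument establishes the lemma without an additional feasibility hypothesis on $h$ (your necessary condition $h_\ell\le n-\gcd(s_\ell,n)$ is the natural starting point); your proposal has the merit of locating the difficulty precisely rather than sliding past it.
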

\begin{proof}
We show that $\mathcal{G}$ has a spanning tree that contains exactly $h_\ell$ edges of each class~$\ell$. The \MADst is then such a spanning tree that achieves the minimum average distance. Let us assume, without loss of generality, that the generator of the group is $s_1=1$. Indeed we can always map such a generator to a different one~\cite{hungerford12}. We know that by taking $n-1$ distinct integers we can generate all the elements of the group $\mathbb{Z}_n$. Thus we need $n-1$ edges of class~$1$ to generate the group. This corresponds to the line graph on $\mathcal{G}$. Now, let us assume that we remove one edge of class~$1$ and include a new edge of class~$\ell'$, for arbitrary~$\ell'$. Thus we have $n-2$ edges of class~$1$ and one edge of class~$\ell'$. By using $n-2$ edges of class~$1$ we cannot generate the full group. Let us say the group element~$q$ is not generated. However, we can always generate~$q$ by considering the extra edge of class~$\ell'$ and by adding the edge $(p,q)$ where $p = q + s_{\ell'} \pmod n$. Of course we can continue this construction by changing the number $h_\ell$  of edges of class~$\ell$ as long as $\sum_{\ell=1}^{|S|/2} h_\ell=n-1$. We need to retain at least an edge corresponding to the generator class, to ensure connectivity. As the group is Abelian the order in which edges are removed/added is irrelevant. Thus, given $h_\ell$ we can generate multiple trees by choosing which specific edge to remove/add. Among the trees with $h_\ell$ edges of class~$\ell$, the \MADst is the one with minimum average distance.
\end{proof}
%

Before we state the main theorem regarding the proposed solution for circulant graphs, we introduce a lemma that proves certain properties of the \emph{rotation} isomorphism, which is defined in \cref{def:rotation_iso} and used in line~7 of~\cref{alg:algoritm_circ_graph}.

\begin{lem}\label{lem:tech_lemma_cayley_proof}
Given a circulant graph $\mathcal{G} = (\mathbb{Z}_n,S)$  let $\mathcal{T}_i$ be a \MADst rooted at an arbitrary node~$i$ with the number of edges of class~$\ell$ being $h_\ell$. Consider the set of $n$ trees $\{\mathcal{T}_j\}_{0}^{n-1}$, where $\mathcal{T}_{j\neq i}$ is obtained using the \emph{rotation by j} defined in~\cref{def:rotation_iso}. Then an edge $(p,q)$ of $\mathcal{G}$ of class~$\ell$ appears in $h_\ell$ of the trees $\{\mathcal{T}_j\}_{0}^{n-1}$. 
\end{lem}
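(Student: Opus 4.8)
The plan is to exploit the translation symmetry of the circulant graph directly. Fix the \MADst $\mathcal{T}_i$ and let $E(\mathcal{T}_i)$ be its edge set, partitioned by class. By \cref{def:rotation_iso}, the rotation by $j$ is the graph isomorphism $v \mapsto v + (j-i) \pmod n$ applied to $\mathcal{T}_i$; crucially, this map sends an edge $(p,q)$ of class $\ell$ to the edge $(p + (j-i), q + (j-i))$, which is again of class $\ell$, since the class of an edge depends only on the difference of its endpoints and that difference is preserved by translation. So each $\mathcal{T}_j$ has exactly $h_\ell$ edges of class $\ell$, and every edge of $\mathcal{T}_j$ is a translate of an edge of $\mathcal{T}_i$.

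The key step is a double-counting argument. Consider the bipartite incidence relation between the $n$ rotated trees $\{\mathcal{T}_j\}_{j=0}^{n-1}$ and a fixed class $\ell$: count pairs $(\mathcal{T}_j, f)$ where $f$ is an edge of class $\ell$ lying in $\mathcal{T}_j$. Summing over $j$ first, each tree contributes exactly $h_\ell$ such edges, giving a total of $n\,h_\ell$. Now I want to sum over edges instead. First I would observe that, by vertex-transitivity of the circulant graph, all edges of a fixed class $\ell$ are equivalent under the $\mathbb{Z}_n$-action, and there are exactly $n$ edges of class $\ell$ (one translate of a representative edge for each of the $n$ group elements — this uses that $s_\ell \neq n/2$; if $2s_\ell \equiv 0$ one gets $n/2$ edges and a small adjustment, but the standard convention in the circulant-graph literature, and the one used here, is that each class contributes $n$ edges, matching $|E| = n|S|/2$). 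The orbit-counting/transitivity argument then says the number of trees $\mathcal{T}_j$ containing a given class-$\ell$ edge $(p,q)$ is independent of which class-$\ell$ edge is chosen: if $(p,q)$ and $(p',q')$ are both class $\ell$, translation by $(p'-p)$ is an automorphism of $\mathcal{G}$ that permutes the family $\{\mathcal{T}_j\}$ among itself (since $\mathcal{T}_{j}$ translated by $\delta$ is $\mathcal{T}_{j+\delta}$) and carries $(p,q)$ to $(p',q')$, hence it is a bijection between $\{j : (p,q)\in\mathcal{T}_j\}$ and $\{j : (p',q')\in\mathcal{T}_j\}$. Call this common count $m_\ell$. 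Double counting gives $n\,h_\ell = n\,m_\ell$, so $m_\ell = h_\ell$, which is exactly the claim.

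The main obstacle I anticipate is making the "number of class-$\ell$ edges equals $n$" step airtight, because of the degenerate case $s_\ell = n/2$ (only possible when $n$ is even) where the involution $s_\ell \mapsto -s_\ell$ is trivial and a class has only $n/2$ edges. If the paper's conventions allow such a generator, the argument still goes through with $n$ replaced by the actual orbit size $|\,\mathrm{orbit}\,|$ on both sides of the double count, so the equality $m_\ell = h_\ell$ is unaffected; I would handle this with a one-line remark rather than a case split. The vertex-transitivity and the observation that the rotation family $\{\mathcal{T}_j\}$ is itself permuted by the $\mathbb{Z}_n$-translations are the two facts doing all the work, and both are immediate from \cref{def:rotation_iso} and the definition of a circulant graph. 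Everything else is bookkeeping.
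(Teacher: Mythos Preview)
Your argument is correct. The paper proves the lemma by a direct bijection: fixing the target edge $(p,q)$, it observes that for each of the $h_\ell$ class-$\ell$ edges $(s,t)$ in $\mathcal{T}_i$ there is exactly one rotation amount $j$ with $(s+j-i,t+j-i)\equiv(p,q)$ or $(q,p)\pmod n$, and these $j$'s are pairwise distinct, so $(p,q)$ lies in exactly $h_\ell$ of the rotated trees. You instead run a double count over all (tree, class-$\ell$ edge) incidences and invoke the transitivity of the $\mathbb{Z}_n$-action on class-$\ell$ edges (plus the fact that translation by $\delta$ sends $\mathcal{T}_j$ to $\mathcal{T}_{j+\delta}$) to show the per-edge count $m_\ell$ is constant, then solve $n\,h_\ell = n\,m_\ell$. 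The two arguments are close cousins---the paper's bijection is essentially the reason the action is free, which is what makes your orbit count work---but your packaging is a bit more structural and makes the role of vertex-transitivity explicit, and it handles the $s_\ell=n/2$ degeneracy cleanly since the orbit size cancels on both sides. The paper's version is marginally more elementary in that it never needs to count how many class-$\ell$ edges $\mathcal{G}$ has.
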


\begin{proof}
	Let us consider an edge $(p,q)$ of class~$\ell$. We need to find how many edges $(s,t)$ of class~$\ell$ in $\mathcal{T}_{0,\dots,n-1}$ can be mapped, via the isomorphism, to $(p,q)$. For this to happen we are looking for the values of $j<n$ for which we either have $(s+j-i,t+j-i) \pmod n = (p,q)$ or $(s+j-i,t+j-i) \pmod n =(q,p)$, as the graph is undirected. Since there is an isomorphism (bijection) between every two trees, there is a unique solution that satisfies one of the two previous equations, for every $j<n$.
	Because there are $h_\ell$ edges of class~$\ell$ in each tree, then we have that any edge of class~$\ell$ in $\mathcal{G}$ will appear in $h_\ell$ of the $n$ trees $\{\mathcal{T}_j\}_{0}^{n-1}$.
	%
\end{proof}

\begin{example}
As mentioned that the isomorphism we have considered performs a \emph{rotation} of a tree over the circulant graph. Let us consider the following example, with $\mathcal{G}=(\mathbb{Z}_9,\{\pm1,\pm4\})$, to clarify this point. The isomorphism we have introduced in~\cref{lem:tech_lemma_cayley_proof} rotates a tree by $j=1$ clockwise, in this example, as the picture below shows.

\begin{minipage}{.40\columnwidth}
			\centering\noindent
				\begin{tikzpicture}[scale=0.8]
					\foreach \i in {0,...,8} \node[circle, draw, minimum size=2.5ex, inner sep = 0pt, outer sep = 0.05cm] (\i) at (-360/9 * \i:2cm) {\small $\i$};
					\foreach \i in {0,...,8}{
						\foreach \j in {+1,-1,+3,-3}{
							\pgfmathtruncatemacro{\nexti}{Mod(\i + \j,9)}
							\draw[thin, black] (\i) -- (\nexti);
						}
					}
					\draw[very thick, red] (0) -- (1);
					\draw[very thick, red] (0) -- (8);
					\draw[very thick, red] (0) -- (3);
					\draw[very thick, red] (0) -- (6);
					\draw[very thick, red] (1) -- (2);
					\draw[very thick, red] (1) -- (4);
					\draw[very thick, red] (1) -- (7);
					\node (A) [right=0.03cm of 0] {};
					\node (B) [below=0.15cm of 1] {};
					\path (A) edge [->, >=stealth',auto, bend left] node {} (B);
				\end{tikzpicture}
		\end{minipage}$\stackrel{\scriptsize Rotation}{\longrightarrow}$
		\begin{minipage}{.4\columnwidth}
			\centering\noindent
			\begin{tikzpicture}[scale=0.8]
					\foreach \i in {0,...,8} \node[circle, draw, minimum size=2.5ex, inner sep = 0pt, outer sep = 0.05cm] (\i) at (-360/9 * \i:2cm) {\small $\i$};
					\foreach \i in {0,...,8}{
						\foreach \j in {+1,-1,+3,-3}{
							\pgfmathtruncatemacro{\nexti}{Mod(\i + \j,9)}
							\draw[thin, black] (\i) -- (\nexti);
						}
					}
					\draw[very thick, red] (1) -- (2);
					\draw[very thick, red] (1) -- (0);
					\draw[very thick, red] (1) -- (4);
					\draw[very thick, red] (1) -- (7);
					\draw[very thick, red] (2) -- (3);
					\draw[very thick, red] (2) -- (5);
					\draw[very thick, red] (2) -- (8);
				\end{tikzpicture}
		\end{minipage}
\end{example}

\begin{algorithm}[t!]
	\KwData{
		Cyclic group order $n$ and generating set $S$\;
		Capacity for class~$\ell$ edges set to $h_\ell\alpha$, with $h_\ell$ such that $\sum_{i=1}^{|S|/2} h_\ell = n - 1$ and $h_{\bar{\ell}}\geq 1$ for at least one generator $s_{\bar{\ell}}$.
	}
	\KwResult{Graphs $\mathcal{G}_1,\dots,\mathcal{G}_{\alpha n}$}
	\smallskip
	$\mathcal{G} \leftarrow \mathsf{CreateCircGraph}(\mathbb{Z}_n,S)$\;
	\emph{{$\triangleright$} Compute a \MADst with number of edges of class~$\ell$ being $h_\ell$ and pick an arbitrary node $i$ as a root}\;
	$\mathcal{T}_i \leftarrow \mathsf{cMAD}(\mathcal{G}, (h_1,\dots,h_\ell))$\;
	\For{$j\leftarrow 0$ \KwTo $n-1$}{
	    \If{$j\neq i$}{
			\emph{{$\triangleright$} Create trees from $\mathcal{T}_i$ via \emph{rotation} (\cref{def:rotation_iso})}\;
			$\mathcal{T}_{j} \leftarrow \mathsf{Rotate}(\mathcal{T}_i,\delta=j-i)$\;
		}
		$(\mathcal{G}_{j\alpha+1},\dots,\mathcal{G}_{(j+1)\alpha}) \leftarrow \mathsf{CreateCopies}(\mathcal{T}_j,\alpha)$\;
	}
	\caption{\small Optimal Spanning Trees of a Circulant Graph}\label{alg:algoritm_circ_graph}
\end{algorithm}

\begin{theorem}\label{thm:cayley_graphs_capacity}
Let $\mathcal{G}=(\mathbb{Z}_n,S)$ be a circulant graph. Let the state of each node be in $\mathbb{R}^k$ where $k=\alpha n$ for some  ${\alpha \in \mathbb{Z}_+}$. Let $S = \{\pm s_1, \pm s_2, \dots, \pm s_{|S|/2}\}$ and let the capacity of each edge of class $\ell$ be $h_\ell\alpha$ where $h_\ell\in \mathbb{Z}_+$ satisfy 
\begin{align}\label{eq:cayley_capacity_constraint}
	\sum_{\ell=1}^{|S|/2} h_\ell = n-1\,, 
\end{align}
and $h_{\bar{\ell}}\geq 1$ for at least one generator $s_{\bar{\ell}}$.
Then, \cref{alg:algoritm_circ_graph} generates a solution  $\mathcal{G}_1,\dots,\mathcal{G}_k$ to~\cref{eq:prob}.
\end{theorem}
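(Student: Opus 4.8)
The plan is to verify that Algorithm~\ref{alg:algoritm_circ_graph} produces a collection $\mathcal{G}_1,\dots,\mathcal{G}_k$ that is (i) feasible for \cref{eq:prob}, i.e.\ it respects the per-edge capacity constraint \cref{eq:clim}, and (ii) has finite cost, i.e.\ each $\mathcal{G}_\ell$ is connected. Feasibility is the substantive part; finiteness of the cost is then immediate since every $\mathcal{G}_\ell$ is (a rotated copy of) a spanning tree.

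First I would record that the algorithm is well defined: by \cref{lem:existance_MAD_circ_graph}, the hypotheses \cref{eq:cayley_capacity_constraint} together with $h_{\bar\ell}\ge 1$ for some generator guarantee the call $\mathsf{cMAD}(\mathcal{G},(h_1,\dots,h_{|S|/2}))$ returns a genuine \MADst $\mathcal{T}_i$ with exactly $h_\ell$ edges of class~$\ell$. Rotations are isomorphisms by \cref{def:rotation_iso}, so each $\mathcal{T}_j$ is again a spanning tree of $\mathcal{G}$ with exactly $h_\ell$ class-$\ell$ edges; in particular each $\mathcal{T}_j$ is connected on all $n$ nodes. The algorithm sets $\mathcal{G}_{j\alpha+1}=\dots=\mathcal{G}_{(j+1)\alpha}=\mathcal{T}_j$ for $j=0,\dots,n-1$, producing exactly $\alpha n = k$ subgraphs, each of which is connected — hence $\sum_{\ell=1}^k \mathcal{H}^*(\mathcal{G}_\ell)<\infty$ by \cref{eq:hseig} and the connectivity characterization of $\lambda_2>0$ used in the proof of \cref{thm:mincut-f}.

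Next I would check the capacity constraint. Fix an edge $e=(p,q)$ of class~$\ell$. By \cref{lem:tech_lemma_cayley_proof}, $e$ appears in exactly $h_\ell$ of the $n$ trees $\{\mathcal{T}_j\}_{j=0}^{n-1}$. Since each tree $\mathcal{T}_j$ is copied $\alpha$ times into the list $\mathcal{G}_1,\dots,\mathcal{G}_k$, the edge $e$ is used in exactly $h_\ell\alpha$ of the subgraphs $\mathcal{G}_1,\dots,\mathcal{G}_k$. By hypothesis the capacity of a class-$\ell$ edge is $c_e = h_\ell\alpha$, so $\sum_{\ell'=1}^k |\{e\}\cap E_{\ell'}| = h_\ell\alpha = c_e$, and \cref{eq:clim} holds (with equality on every edge). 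Combining this with the finite-cost observation shows the output is a feasible solution of \cref{eq:prob} with finite cost, which is the claim.

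The main obstacle is making the counting in the second step fully rigorous, i.e.\ pinning down \cref{lem:tech_lemma_cayley_proof} in the form ``each class-$\ell$ edge lies in exactly $h_\ell$ of the $n$ rotated trees.'' The content is a double-counting / orbit argument: the cyclic group $\mathbb{Z}_n$ acts on edges of a given class, this action is transitive on the (at most $n$) edges of that class, and the total number of (tree, class-$\ell$ edge) incidences over all $n$ rotated trees is $n h_\ell$; dividing by the number of class-$\ell$ edges — which equals $n$ when $2s_\ell\not\equiv 0$ and needs a separate bookkeeping convention when $s_\ell = n/2$ — gives the per-edge count $h_\ell$. Since \cref{lem:tech_lemma_cayley_proof} is already established in the excerpt, I would simply invoke it; the only care needed is to confirm that ``appears in $h_\ell$ of the trees'' translates to ``appears in $h_\ell\alpha$ of the $\mathcal{G}_\ell$'' after the $\alpha$-fold duplication, which is immediate from the construction in line~8 of the algorithm.
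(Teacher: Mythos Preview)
Your proposal is correct and follows essentially the same route as the paper: invoke \cref{lem:existance_MAD_circ_graph} to make the $\mathsf{cMAD}$ call well defined, use the rotation isomorphism to get $n$ spanning trees, apply \cref{lem:tech_lemma_cayley_proof} for the per-edge count, and duplicate $\alpha$ times to saturate the capacity exactly. The paper additionally verifies that the total capacity equals $k(n-1)$ and recalls the average-distance formula for $\mathcal{H}^*$ on trees, but these observations serve the subsequent optimality discussion rather than the feasibility claim in the theorem itself, so your more direct argument suffices.
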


\begin{proof}
	Given a circulant graph, where all edges in the same class have the same capacity, we have
		\begin{align}\label{eq:capacity_Cayley}
		\sum_{e\in E} c_e = n \sum_{\ell=1}^{|S|/2} c_\ell\,,
	\end{align}
	where $c_\ell$ denotes the capacity of edges of class $\ell$.
    Note that \cref{eq:capacity_Cayley} holds as we assumed that edges in the class~$\ell$---$s_\ell$ and $s_{-\ell}$---have the same capacity. Indeed, an edge in the class~$\ell$ appears in the graph whenever there are group elements $g,h \in \mathbb{Z}_n$ such that $g-h = s_\ell \pmod n$. There are exactly~$n$ distinct pairs of elements in $\mathbb{Z}_n$  whose difference is $s_\ell$.  
	
	As $c_\ell = h_\ell\alpha$ and $k=\alpha n$, by using  \cref{eq:cayley_capacity_constraint}  and \cref{eq:capacity_Cayley}, we get
	\begin{align}\label{eq:cayley_total_capacity}
		n \sum_{\ell=1}^{|S|/2} c_\ell &=\alpha n \sum_{\ell=1}^{|S|/2} h_\ell = n (n-1)\alpha = k(n-1)\,.
	\end{align}
		
	Thus, a circulant graph with such capacities achieves the minimum total capacity requirement for the optimization problem~\cref{eq:prob} to have a finite solution (see~\cref{eq:mincut-c1}). Furthermore,~\cref{eq:cayley_total_capacity} implies that the graphs $\mathcal{G}_1,\dots,\mathcal{G}_k$ must have $n-1$ edges, namely each needs to be a tree. In light of \cite[Thm 3.2]{Yasin19CDC}, for any tree $\mathcal{G}_\ell$, we have
	$\mathcal{H}^*(\mathcal{G}_\ell) = \tilde{\delta}(\mathcal{G}_\ell)(n-1)/4n,$
	where $\tilde{\delta}(\mathcal{G}_\ell)$ is the average distance of $\mathcal{G}_\ell$. Accordingly, we are left to show that we can construct $\mathcal{G}_1,\dots,\mathcal{G}_k$ that are all \MADst and satisfy the capacity constraint in \cref{eq:clim}.
	
	Recall from~\cref{lem:existance_MAD_circ_graph} a \MADst has $h_\ell$ edges of class~$\ell$, such that~\cref{eq:cayley_capacity_constraint} is satisfied. Furthermore, from~\cref{lem:tech_lemma_cayley_proof}, we know that if we assign to each node~$i\in\mathcal{G}$ a \MADst, then an edge $(p,q)\in \mathcal{G}$ of class~$\ell$ will belong to $h_\ell$ trees. Thus if we take $\alpha$ copies of $\mathcal{T}_i$ then an edge $(p,q)$ of class~$\ell$ will belong to $h_\ell\alpha$ trees. Thus by considering $\mathcal{G}_i,\dots,\mathcal{G}_{i+\alpha}$ to be $\alpha$ copies of the \MADst rooted at~$i$, the set $\mathcal{G}_1,\dots,\mathcal{G}_{\alpha n}$ achieves the required capacity. 
\end{proof}

\subsection{Optimality and Complexity of~\cref{alg:algoritm_circ_graph}}

For circulant graphs satisfying the premise of~\cref{thm:cayley_graphs_capacity}, any feasible solution to \cref{eq:prob} must be a set of trees $\mathcal{G}_1,\dots,\mathcal{G}_k$ (as per the proof of \cref{thm:cayley_graphs_capacity}). For each tree $\mathcal{G}_i$, $\mathcal{H}^*(\mathcal{G}_i)$ is uniquely defined by the average distance among nodes on $\mathcal{G}_i$ (e.g., see \cite{Yasin19CDC}). Hence, solving \cref{eq:prob} becomes equivalent to finding a feasible combination of spanning trees that minimize the sum of average distances among nodes.
Hence, by the definition of MAD trees, the cost in \cref{eq:prob} can never be lower than $k\mathcal{H}^*(\mathcal{G}^{\mathrm{MAD}})$, where $\mathcal{G}^{\mathrm{MAD}}$ is a MAD tree of $\mathcal{G}$.  Given the capacity constraints on edges of class~$\ell$, let $\mathcal{G}^{\mathrm{cMAD}}$ be the \MADst provided by~\cref{alg:algoritm_circ_graph}. Furthermore, let $\mathcal{G}_1^{\mathrm{Opt}}, \dots,\mathcal{G}_k^{\mathrm{Opt}}$ be the optimal solution to \cref{eq:prob}. Then, we have
\begin{align}\label{eq:order_solutions}
     k\mathcal{H}^*(\mathcal{G}^{\mathrm{MAD}}) \leq  \sum_{i=1}^k  \mathcal{H}^*(\mathcal{G}_i^{\mathrm{Opt}}) \leq  k\mathcal{H}^*(\mathcal{G}^{\mathrm{cMAD}}).
\end{align}
In general, the capacity constraints may not allow for a combination of $k$ MAD trees to be a feasible solution to \cref{eq:prob} and the left inequality in \cref{eq:order_solutions} can be strict. However, both inequalities in~\cref{eq:order_solutions} become equalities whenever the capacity constraints are such that the {\MADst}s are also MAD trees. In such cases, the solution generated by \cref{alg:algoritm_circ_graph} is optimal. One such example is the complete graphs satisfying the premise in \cref{thm:comp-t}, where \cref{alg:algoritm_circ_graph} would return the unique optimal solution described therein. Another example is the family of distance hereditary graphs, where every connected induced subgraph has the same pairwise distances between nodes as in the original graph. Note that the {\MADst}s being MAD trees is sufficient but not necessary to obtain optimal solutions via ~\cref{alg:algoritm_circ_graph}. In  ~\cref{sec:sim_results},  we also provide numerical results with various circulant graphs to show that  ~\cref{alg:algoritm_circ_graph} generates optimal or near-optimal solutions in most cases when the premise of \cref{thm:cayley_graphs_capacity} is satisfied. 

One of the main advantages of~\cref{alg:algoritm_circ_graph} is that it requires to find a single \MADst, which is then rotated and replicated to produce $\mathcal{G}_1,\dots,\mathcal{G}_k$. This approach significantly reduces the computational load compared to searching for an optimal combination of $k$ spanning trees that together satisfy the capacity constraints. However, finding a \MADst is still a computationally challenging task for arbitrary circulant graphs. Finding the unconstrained version (a MAD tree) is in general NP-hard~\cite{johnson1978complexity}. For special cases such as  distance hereditary circulant graphs, i.e., circulant graphs where every cycle of length five or more has at least two crossing diagonals (e.g.,  $\mathcal{G}=(\mathbb{Z}_6,\{\pm1, \pm3\})$), a MAD/cMAD tree can be computed in polynomial time~\cite{bandelt1986distance}. Accordingly,~\cref{alg:algoritm_circ_graph} provides the optimal solution in polynomial time for such special cases. For the general case, an approximate \MADst satisfying the capacity constraints can be obtained by adapting a polynomial time algorithm (e.g., \cite{wu2000polynomial}) to use in ~\cref{alg:algoritm_circ_graph}.

\subsection{Improving an Initial Solution via Capacity Allocation}
We analyze next the problem of optimally allocating additional/remaining capacity to an initial set of connected $\mathcal{G}_1^0, \dots,\mathcal{G}_k^0$ inside the feasible space of~\cref{eq:prob}. We are interested in this problem with two motivations. Firstly, given the intractability of the combinatorial problem in~\cref{eq:prob} for arbitrary graphs, a practical solution approach is to first find a good combination of spanning trees and then to allocate the remaining capacity. For example, for complete graphs that have more capacity than required in \cref{thm:comp-t}, a solution can be obtained by allocating the remaining capacity over the star graphs specified therein.  Secondly, if more capacity becomes available (e.g., new edges), it is more practical to improve an existing solution than redesigning the whole system. 

Given a network $\mathcal{G}=(V,E)$, let $\mathcal{E}=E \times \{1,2, \dots,k\}$ be the set of edge-dimension pairs, where each $(e,\ell) \in \mathcal{E}$ corresponds to edge $e$ in graph $\mathcal{G}_\ell$. Accordingly, $\mathcal{E}$ can be represented as the union of disjoint sets, i.e.,
\begin{equation}
\label{eq:part1}
   \mathcal{E}= \bigcup_{e\in E} \mathcal{E}_e\,,\quad \text{where} \quad \mathcal{E}_e = \{(\epsilon,\ell) \in \mathcal{E} \mid \epsilon=e\}\,.
\end{equation}
Let $\mathcal{G}_1^0, \dots,\mathcal{G}_k^0$ be connected subgraphs of $\mathcal{G}$ that satisfy \cref{eq:clim}, 
and for each $e\in E$, let $\tilde{c}_e$ denote the remaining capacity,
\begin{equation}
\label{eq:crem}
     \tilde{c}_e = c_e-  \sum_{\ell=1}^k|\{e\} \cap E_\ell^0|\,.
\end{equation}
 Furthermore, let $\mathcal{E}^0$ be the edge-dimension pairs encoded in the initial graphs $\mathcal{G}_1^0, \dots,\mathcal{G}_k^0$: $ \mathcal{E}^0 = \{(e,\ell) \in \mathcal{E} \mid e \in E_\ell^0\}$. Accordingly, in light of \cref{eq:hseig}, finding an optimal allocation of remaining capacity to solve \cref{eq:prob} can be formulated as 
\begin{align}\label{eq:prob2}
       \max_{\mathcal{E}'\subseteq \mathcal{E}\setminus \mathcal{E}^0} \quad &  -\sum_{\ell=1}^k \sum_{i=2}^n \frac{1}{\lambda_i (L'_\ell)} \\
       \mathrm{s.t.} \quad & |\mathcal{E}' \cap \mathcal{E}_e| \leq \tilde{c}_e, \; \forall e \in E,  \nonumber
\end{align}
where $L'_\ell$ is the Laplacian matrix associated with the graph $\mathcal{G}_\ell'=(V,E_\ell^0 \cup E_\ell')$ and $E_\ell'=  \{e \in E \mid (e,\ell)\in \mathcal{E}'\}$.
\begin{theorem}
    \label{thm:submod}The problem in \cref{eq:prob2} is the maximization of a monotone submodular function under a matroid constraint.
\end{theorem}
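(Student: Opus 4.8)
The plan is to prove the two assertions in the statement separately: (i) the feasible family $\{\mathcal{E}'\subseteq\mathcal{E}\setminus\mathcal{E}^0 : |\mathcal{E}'\cap\mathcal{E}_e|\le\tilde{c}_e\ \forall e\in E\}$ is the set of independent sets of a matroid, and (ii) the objective $f(\mathcal{E}')=-\sum_{\ell=1}^{k}\sum_{i=2}^{n}1/\lambda_i(L'_\ell)$ is a monotone submodular function on the ground set $\mathcal{E}\setminus\mathcal{E}^0$.

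For (i), I would exhibit the matroid explicitly as a partition matroid. By \cref{eq:part1} the sets $\mathcal{E}_e\cap(\mathcal{E}\setminus\mathcal{E}^0)$, $e\in E$, partition $\mathcal{E}\setminus\mathcal{E}^0$, and the constraint in \cref{eq:prob2} says exactly that $\mathcal{E}'$ meets the $e$-th cell in at most $\tilde{c}_e$ elements. Checking the matroid axioms is then routine: the empty set is feasible; feasibility is inherited by subsets; and if $A,B$ are feasible with $|A|<|B|$, then, because the cells partition the ground set, there is a cell $\mathcal{E}_e$ with $|A\cap\mathcal{E}_e|<|B\cap\mathcal{E}_e|\le\tilde{c}_e$, so any element of $(B\setminus A)\cap\mathcal{E}_e$ can be added to $A$ without violating feasibility.

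For (ii), write $f=-\sum_{\ell=1}^{k}g_\ell$ with $g_\ell(\mathcal{E}')=\sum_{i=2}^{n}1/\lambda_i(L'_\ell)=\operatorname{tr}\big((L'_\ell)^{+}\big)$, the trace of the Moore--Penrose pseudoinverse of the Laplacian of $\mathcal{G}'_\ell=(V,E_\ell^0\cup E_\ell')$. Each $g_\ell$ depends on $\mathcal{E}'$ only through the edges added in dimension $\ell$, so it suffices to show that $A\mapsto-\operatorname{tr}\big(L(E_\ell^0\cup A)^{+}\big)$ is monotone submodular for each $\ell$ and then sum over $\ell$, a sum of monotone submodular functions being monotone submodular. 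Since $\mathcal{G}_\ell^0$ is connected, every Laplacian appearing here has $\lambda_2>0$ (as used in the proof of \cref{thm:mincut-f}), so $g_\ell$ is finite-valued throughout the feasible set. Monotonicity is immediate: adding an edge $(a,b)$ adds the rank-one positive semidefinite matrix $v_ev_e^\top$, with $v_e=\mathbf{e}_a-\mathbf{e}_b$, to the Laplacian, which weakly increases every $\lambda_i$ by Weyl's inequality, hence weakly decreases each $1/\lambda_i$, hence weakly increases $-\operatorname{tr}(L^{+})$.

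The substance is submodularity of $A\mapsto-\operatorname{tr}\big(L(E_\ell^0\cup A)^{+}\big)$, i.e.\ that the marginal gain of a new edge is non-increasing as more edges are present. I would first put the marginal gain in closed form: applying the Sherman--Morrison identity to the regularized, invertible Laplacian $\hat{L}=L+\tfrac{1}{n}\mathbf{1}\mathbf{1}^\top$ and using $\mathbf{1}^\top v_e=0$, one finds that adding edge $e=(a,b)$ to a connected graph with Laplacian $L$ changes the objective by $\|L^{+}v_e\|^2/\big(1+v_e^\top L^{+}v_e\big)\ge 0$, where $v_e^\top L^{+}v_e$ is the effective resistance across $e$; equivalently, since $n\operatorname{tr}(L^{+})$ is the total effective resistance (Kirchhoff index), this is the reduction in total effective resistance caused by the new edge. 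Submodularity is then exactly the statement that this reduction is diminishing in the edge set, i.e.\ the supermodularity of the Kirchhoff index as a set function on connected graphs, which is a known structural property one can invoke, or establish directly through a short computation with the rank-one update formula above. Combining the partition matroid of (i) with the monotone submodular objective of (ii) yields the theorem. The one genuinely delicate step is this monotonicity-of-marginal-gains claim; the matroid verification and the monotonicity of $f$ are routine.
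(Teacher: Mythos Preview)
Your proposal is correct and follows the same overall architecture as the paper's proof: identify the constraint in \cref{eq:prob2} as a partition matroid on the ground set $\mathcal{E}\setminus\mathcal{E}^0$ induced by the blocks $\mathcal{E}_e$ in \cref{eq:part1}, decompose the objective as a sum over dimensions $\ell$, argue that each summand $-\operatorname{tr}((L'_\ell)^{+})$ is monotone submodular, and conclude by closure of monotone submodularity under nonnegative sums. Where the paper simply cites \cite{summers2015topology} for submodularity and \cite{Ellens11} for monotonicity of the effective resistance, you instead supply more self-contained arguments: Weyl's inequality for monotonicity and a Sherman--Morrison rank-one update to isolate the marginal gain $\|L^{+}v_e\|^2/(1+v_e^\top L^{+}v_e)$, reducing submodularity to the diminishing-returns property of the Kirchhoff index. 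That last step---showing the marginal gain is nonincreasing in the edge set---is indeed the only substantive point, and you correctly flag it as such; the paper absorbs it entirely into the citation to \cite{summers2015topology}. Both routes are valid; yours is more elementary and transparent about where the work lies, while the paper's is shorter by outsourcing the analysis.
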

\begin{proof}
Given a network $\mathcal{G}=(V,E)$ and a set of connected subgraphs $\mathcal{G}_1^0, \dots,\mathcal{G}_k^0$,  it follows from \cite[Theorem 3]{summers2015topology} that $\phi_\ell: 2^{\mathcal{E} \setminus \mathcal{E}^0} \mapsto \mathbb{R}$, with
 $    \phi_\ell (\mathcal{E}')= - \mathcal{H}^*(\mathcal{G}'_\ell)$, is submodular. 
Furthermore, in \cite[Theorem 2.7]{Ellens11} it was shown that the effective resistance, which equals $n\sum_{i=2}^n 1/\lambda_i (L'_\ell)$, strictly decreases with additional edges in the corresponding graph. Hence, $\phi_\ell (\mathcal{E}')$ is a monotone submodular function. Thus, the objective in \cref{eq:prob2}, which equals $\sum_{\ell=1}^k\phi_\ell (\mathcal{E}')$, is also monotone submodular. As \cref{eq:part1}, the constraint of \cref{eq:prob2}, is a standard partition matroid, we complete the proof.
\end{proof}
In light of \cref{thm:submod}, a greedy algorithm can be used to obtain a 1/2-approximation to \cref{eq:prob2} (e.g., see  \cite{krause2014submodular}). Starting with $\mathcal{E}'=\emptyset$, such a greedy algorithm grows $\mathcal{E}'$ iteratively by adding an element from  $\mathcal{E}\setminus (\mathcal{E}^0 \cup \mathcal{E}')$ that maximally increases the objective while satisfying the constraint in \cref{eq:prob2}. This procedure continues until no more elements can be added to $\mathcal{E}'$, i.e., all the remaining capacity is allocated.

\section{Numerical Results}\label{sec:sim_results}

We compare $\mathcal{H}^*(\mathcal{G}^{\mathrm{cMAD}})$ to $\mathcal{H}^*(\mathcal{G}^{\mathrm{MAD}})$ to demonstrate an upper bound on the optimality gap. More specifically, we consider the ratio $$\Delta = \frac{\mathcal{H}^*(\mathcal{G}^{\mathrm{cMAD}})-\mathcal{H}^*(\mathcal{G}^{\mathrm{MAD}})} {\mathcal{H}^*(\mathcal{G}^{\mathrm{cMAD}})}\,.$$
\cref{tbl:comparisons} provides the results for various graphs.  For each circulant graph with $n$ nodes and the classes of edges, $\ell$, we have considered all the possible values of $h_\ell$ that satisfy~\cref{eq:cayley_capacity_constraint}. Among those values of $h_\ell$, we provide the results for the ones that yield the best and worst values of $\mathcal{H}^*(\mathcal{G}^{\mathrm{cMAD}})$. The small values of $\Delta$ in \cref{tbl:comparisons} show that Alg. 1 provides optimal or near-optimal solutions in all of these cases.

\section{Conclusion}
\label{sec:conclusion}
We have introduced a robust linear consensus problem over networks with capacity constraints bounding the number of state variables that can be communicated instantaneously over the edges. We investigated the optimal assignment of edges to subsets of state-dimensions for maximizing robustness. \color{black}We showed that for any connected network with $n$ nodes, each of which has a $k$-dimensional state vector, if a finite steady state variance of the states can be achieved via a time-invariant assignment of edges to state-dimensions, then the minimum cut capacity of the network should be at least $k$ and the total edge capacity should be at least $k(n-1)$.  \color{black} Under certain conditions on the edge capacities, we provided the optimal solution for complete graphs and an approximate solution for circulant graphs.  The latter is optimal in certain cases and we also provided numerical results showing its near-optimality in various examples. We also considered the problem of optimally allocating additional capacity to improve a feasible initial solution. We showed that this problem corresponds to the maximization of a submodular function subject to a matroid constraint, which can be solved approximately via a greedy algorithm.

\begin{table}
	\centering
\begin{tabular}{c|c|c|c|c|c}
		$n$ & $\ell$ & $h_\ell$ & $\mathcal{H}^*(\mathcal{G}^{\mathrm{MAD}})$  & $\mathcal{H}^*(\mathcal{G}^{\mathrm{cMAD}})$ & $\Delta$ \\
		\hline
		\multirow{4}{*}{10} & \multirow{2}{*}{3,5}	& 5,4 	 & \multirow{2}{*}{0.605} & 0.605   &  0\%\\
							&  & 8,1    & & 0.645   &  6.0\%  \\
		                    \cline{2-6}
							& \multirow{2}{*}{1,3,4} & 3,3,3  & \multirow{2}{*}{0.481} & 0.481   &  0\%   \\
                            & & 1,7,1  &  & 0.570   &  15.6\% \\
        \hline
        \multirow{4}{*}{15} 
        					& \multirow{2}{*}{1,3,4} & 4,6,4  & \multirow{2}{*}{0.622} & 0.622   & 0\%	\\
        					& & 7,3,4  &  & 0.653   & 4.7\%	\\
        				     \cline{2-6}
        					& \multirow{2}{*}{3,4,7} & 4,4,5  & \multirow{2}{*}{0.640} & 0.640   & 0\% 	\\
        					& & 3,1,10  &  & 0.707   & 9.5\% \\
        \hline
        \multirow{2}{*}{20} & \multirow{2}{*}{1,2,4} & 8,6,5  & \multirow{2}{*}{0.734} & 0.734  & 0\%\\
                            &  & 2,2,15  &  & 0.801   & 8.4\%	\\
      \hline
        \multirow{2}{*}{30} & \multirow{2}{*}{1,4,5} & 9,10,10 &  \multirow{2}{*}{0.908} & 0.908 & 0\%\\
                            & & 4,16,9 & & 0.934 & 2.7\%\\
       \hline
        \multirow{2}{*}{35} & \multirow{2}{*}{1,6,7,10} & 10,4,10,10 &  \multirow{2}{*}{0.770} & 0.770 & 0\%\\
                            &  & 6,8,15,5 & & 0.783 & 1.7\%
    \end{tabular}\caption{$n$ is the number of nodes, $\ell$ is the classes of edges and $h_\ell$ is the number of edges (each with a capacity of $h_\ell \alpha$) in each class. The observed values of the robustness measure in ~\cref{eq:hseig} for MAD and cMAD trees, and the maximum and minimum of their relative difference, $\Delta$, are shown. 
    }\label{tbl:comparisons}
\end{table}

\bibliography{MyReferences}

\begin{thebibliography}{10}

\bibitem{Young10}
G.~F. Young, L.~Scardovi, and N.~E. Leonard, ``Robustness of noisy consensus
  dynamics with directed communication,'' in {\em American Control Conference
  (ACC)}, pp.~6312--6317, 2010.

\bibitem{Bamieh12}
B.~Bamieh, M.~R. Jovanovic, P.~Mitra, and S.~Patterson, ``Coherence in
  large-scale networks: Dimension-dependent limitations of local feedback,''
  {\em IEEE Trans. on Automatic Control}, vol.~57, no.~9, 2012.

\bibitem{Shi13}
G.~Shi and K.~H. Johansson, ``Robust consensus for continuous-time multiagent
  dynamics,'' {\em SIAM Journal on Control and Optimization}, vol.~51, no.~5,
  pp.~3673--3691, 2013.

\bibitem{Yasin15TNSE}
A.~Y. Yaz{\i}c{\i}o\u{g}lu, M.~Egerstedt, and J.~S. Shamma, ``Formation of
  robust multi-agent networks through self-organizing random regular graphs,''
  {\em IEEE Trans. on Netw. Sci. and Eng.}, vol.~2, no.~4, pp.~139--151, 2015.

\bibitem{Siami16}
M.~Siami and N.~Motee, ``Fundamental limits and tradeoffs on disturbance
  propagation in linear dynamical networks,'' {\em IEEE Trans. on Automatic
  Control}, vol.~61, no.~12, pp.~4055--4062, 2016.

\bibitem{jadbabaie2013combinatorial}
A.~Jadbabaie and A.~Olshevsky, ``Combinatorial bounds and scaling laws for
  noise amplification in networks,'' in {\em European Control Conference
  (ECC)}, pp.~596--601, 2013.

\bibitem{Yasin19CDC}
Y.~Yaz{\i}c{\i}o{\u{g}}lu, W.~Abbas, and M.~Shabbir, ``Structural robustness to
  noise in consensus networks: Impact of average degrees and average
  distances,'' in {\em IEEE Conference on Decision and Control}, 2019.

\bibitem{farina2019randomized}
F.~Farina and G.~Notarstefano, ``A randomized block subgradient approach to
  distributed big data optimization,'' in {\em IEEE Conference on Decision and
  Control (CDC)}, pp.~6362--6367, 2019.

\bibitem{dankelmann2000average}
P.~Dankelmann and R.~Entringer, ``Average distance, minimum degree, and
  spanning trees,'' {\em Journal of Graph Theory}, vol.~33, no.~1, 2000.

\bibitem{hungerford12}
T.~W. Hungerford, {\em Abstract algebra: An introduction}.
\newblock Springer, 2012.

\bibitem{godsil-royle13}
C.~Godsil and G.~F. Royle, {\em Algebraic graph theory}, vol.~207.
\newblock Springer Science \& Business Media, 2013.

\bibitem{yazicioglu2020high}
Y.~Yaz{\i}c{\i}o{\u{g}}lu and A.~Speranzon, ``High dimensional robust consensus
  over networks with limited capacity,'' {\em IEEE Control Systems Letters},
  vol.~5, no.~6, pp.~2024--2029, 2020.

\bibitem{gross2003handbook}
J.~L. Gross and J.~Yellen, {\em Handbook of graph theory}.
\newblock CRC press, 2003.

\bibitem{usevitch2017r}
J.~Usevitch and D.~Panagou, ``$r$-robustness and $(r, s)$-robustness of
  circulant graphs,'' in {\em IEEE Conference on Decision and Control (CDC)},
  pp.~4416--4421, 2017.

\bibitem{ekambaram2013circulant}
V.~N. Ekambaram, G.~C. Fanti, B.~Ayazifar, and K.~Ramchandran, ``Circulant
  structures and graph signal processing,'' in {\em IEEE International
  Conference on Image Processing}, 2013.

\bibitem{johnson1978complexity}
D.~S. Johnson, J.~K. Lenstra, and A.~R. Kan, ``The complexity of the network
  design problem,'' {\em Networks}, vol.~8, no.~4, pp.~279--285, 1978.

\bibitem{bandelt1986distance}
H.-J. Bandelt and H.~M. Mulder, ``Distance-hereditary graphs,'' {\em Journal of
  Combinatorial Theory, Series B}, vol.~41, no.~2, pp.~182--208, 1986.

\bibitem{wu2000polynomial}
B.~Y. Wu, G.~Lancia, V.~Bafna, K.-M. Chao, R.~Ravi, and C.~Y. Tang, ``A
  polynomial-time approximation scheme for minimum routing cost spanning
  trees,'' {\em SIAM Journal on Computing}, pp.~761--778, 2000.

\bibitem{summers2015topology}
T.~Summers, I.~Shames, J.~Lygeros, and F.~D{\"o}rfler, ``Topology design for
  optimal network coherence,'' in {\em Eur. Control Conf.}, pp.~575--580, 2015.

\bibitem{Ellens11}
W.~Ellens, F.~Spieksma, P.~Van~Mieghem, A.~Jamakovic, and R.~Kooij, ``Effective
  graph resistance,'' {\em Linear Algebra and its Applications}, vol.~435,
  no.~10, pp.~2491--2506, 2011.

\bibitem{krause2014submodular}
A.~Krause and D.~Golovin, ``Submodular function maximization,'' in {\em
  Tractability: Practical Approaches to Hard Problems}, pp.~71--104, Cambridge
  University Press, 2014.

\end{thebibliography}

\end{document}